\newtheorem{theorem}{Theorem}
\newtheorem{corollary}{Corollary}
\newtheorem{lemma}{Lemma}
\begin{document}
\title{Data Attacks on Power Grids: Leveraging Detection}
\author{\authorblockN{Deepjyoti Deka,~ Ross Baldick ~and~ Sriram Vishwanath}
\authorblockA{Department of Electrical \& Computer Engineering, The University of Texas at Austin\\
Email: deepjyotideka@utexas.edu, baldick@ece.utexas.edu, sriram@ece.utexas.edu }}

\maketitle
\begin{abstract}
Data attacks on meter measurements in the power grid can lead to errors in state estimation. This paper presents a new data attack model where an adversary produces changes in state estimation despite failing bad-data detection checks. The adversary achieves its objective by making the estimator incorrectly identify correct measurements as bad data. The proposed attack regime's significance lies in reducing the minimum sizes of successful attacks to more than half of that of undetectable data attacks. Additionally, the attack model is able to construct attacks on systems that are resilient to undetectable attacks. The conditions governing a successful data attack of the proposed model are presented along with guarantees on its performance. The complexity of constructing an optimal attack is discussed and two polynomial time approximate algorithms for attack vector construction are developed. The performance of the proposed algorithms and efficacy of the hidden attack model are demonstrated through simulations on IEEE test systems.
\end{abstract}

\section{Introduction}
One of the basic facets of research and actual deployment in smart grid has been increased data collection from different meters for improved monitoring and control of dynamic events. Accurate data collection also aids formation of optimal prices and price-responsive demand. However, a data driven approach makes the grid vulnerable to cyber-attacks on meter measurements. A coordinated data attack on meter recordings in Phasor Measurement Units (PMUs) \cite{pmu1} and Remote Terminal Units (RTUs) or on the communication channels in Supervisory Control and Data Acquisition (SCADA) systems can in principle lead to incorrect electricity prices as well as to large blackouts.

Data attacks on meter measurements in the power grid is an active area of research. The authors of \cite{hidden} first introduced the problem of undetectable data attacks that bypass bad-data tests at the state estimator. Simple linear algebraic techniques show that if the malicious measurements lie in the column space of the measurement matrix, the attack goes undetected. In reference \cite{hidden}, an attack vector consisting of the malicious measurements is constructed using projection matrices based on the measurement matrix. This work has been followed by several techniques to select locations for introducing data attacks under different grid conditions and adversarial objectives. Reference \cite{poor} discusses the construction of an optimal hidden attack that requires manipulation of the minimum number of measurements using $l_0$ and $l_1$ recovery methods. Reference \cite{sou} studies the creation of the optimal attack vector as a mixed integer linear program. The authors of \cite{deka} discuss graph based design of optimal attack vectors for systems observed by PMUs. Data attacks aimed at affecting the estimation of a pre-specified set of state variables is presented in \cite{deka1}. A heuristic based detector for malicious data is presented in \cite{thomas}.

A majority of the prior work in this area focus on constructing hidden data attacks that evade bad-data detection tests at the state estimator. In this paper, we analyze a detectable regime of data attacks. It is worth mentioning that at the state estimator, bad data detection is followed by a scheme for bad-data identification. Our proposed attack regime succeeds despite detection by deceiving the bad-data identifier into labeling uncorrupted measurements as bad data. While writing this manuscript, we discovered a related work demonstration detectable data attacks in \cite{frame}. The attack model in \cite{frame} searches for an optimal hidden attack and then creates a detectable attack by corrupting only half of the measurements necessary for a hidden attack. In a sense, our model is a generalization of the framework used in \cite{frame}, with some key differences. Our attack model considers power system cases where a subset of the measurements are incorruptible by the adversary. By overcoming the presence of incorruptible measurements, the cardinality of our data attacks can be reduced by more than $50\%$ of the cardinality of the optimal hidden attack, whereas in \cite{frame}, the reduction in cardinality is exactly by half. More importantly, unlike the framework in \cite{frame} our detectable attack model is able to greatly expand the range of feasible attacks to configurations where no hidden attacks are possible. Further, we show that considering incorruptible measurements in the system makes the problem of constructing an optimal detectable attack NP-hard in general. In contrast, detectable attacks that do not overcome incorruptible measurements can be constructed in polynomial time in our measurement set-up.

The rest of this paper is organized as follows. The next section presents a description of the system model used in state estimation and bad-data detection, and then introduces our data attack model. We derive conditions necessary for a successful attack of our regime and provide provable guarantees on their cardinality in Section \ref{sec:adversary solution}. The two approximate algorithms to design an optimal attack vector for our regime are presented in Section \ref{sec:algo}. Both of them require information limited to the structure of the measurement matrix and do not need the numerical values of grid parameters. Simulations of the proposed algorithms on test IEEE bus systems are shown in Section \ref{sec:results}. Finally, concluding remarks and future directions of work are presented in Section \ref{sec:conclusion}.

\section{Power Grid State Estimation and Attack Models}
\label{sec:attack}
We represent the power grid by an undirected graph $(V,E)$, where $V$ denotes the set of buses and $E$ represents the set of transmission lines connecting those buses. In this paper, we consider DC power flow model \cite{abur} for state estimation in the grid that is given by:
\begin{align}
z = Hx + e \label{dcmodel}
\end{align}
Here $z \in \mathbb{R}^m$ is the $m$ length vector of measurements. We consider two kinds of measurements collected through conventional meters and PMUs in the grid. These include: a) flow measurements on lines and b) voltage phasor measurements on buses. $x \in \mathbb{R}^n$ is the state vector of length $n$ and consists of the bus phase angles. $H$ is the measurement matrix and $e$ is a zero mean Gaussian noise vector with covariance $\Sigma$. Let the $k^{th}$ entry in $z$ measure the power flow on the line between nodes $i$ and $j$. We have $z(k) = B_{ij}x(i)-B_{ij}x(j)$, where $B_{ij}$ is the magnitude of susceptance of the line $(i,j)$. The corresponding $k^{th}$ row in the measurement matrix $H$ is given by $H_{k} = [0..0~~B_{ij}~~ 0..0~~-B_{ij}~~0..0]$. Similarly, let the $l^{th}$ entry in $z$ measure the phase angle at bus $i$. The corresponding row $H_l$ in the measurement matrix is $H_{l} = [0..0 ~~ 1~~ 0..0]$ with one at the $i^{th}$ position. We assume $m>n$ and full column rank of matrix $H$, as necessary for unique state estimation.

\textbf{State Estimator:} The schematic diagram of the state estimator in the grid is shown in Figure \ref{estimator} \cite{monticelli,abur}.
\begin{figure}
\centering
\includegraphics[width=0.44\textwidth]{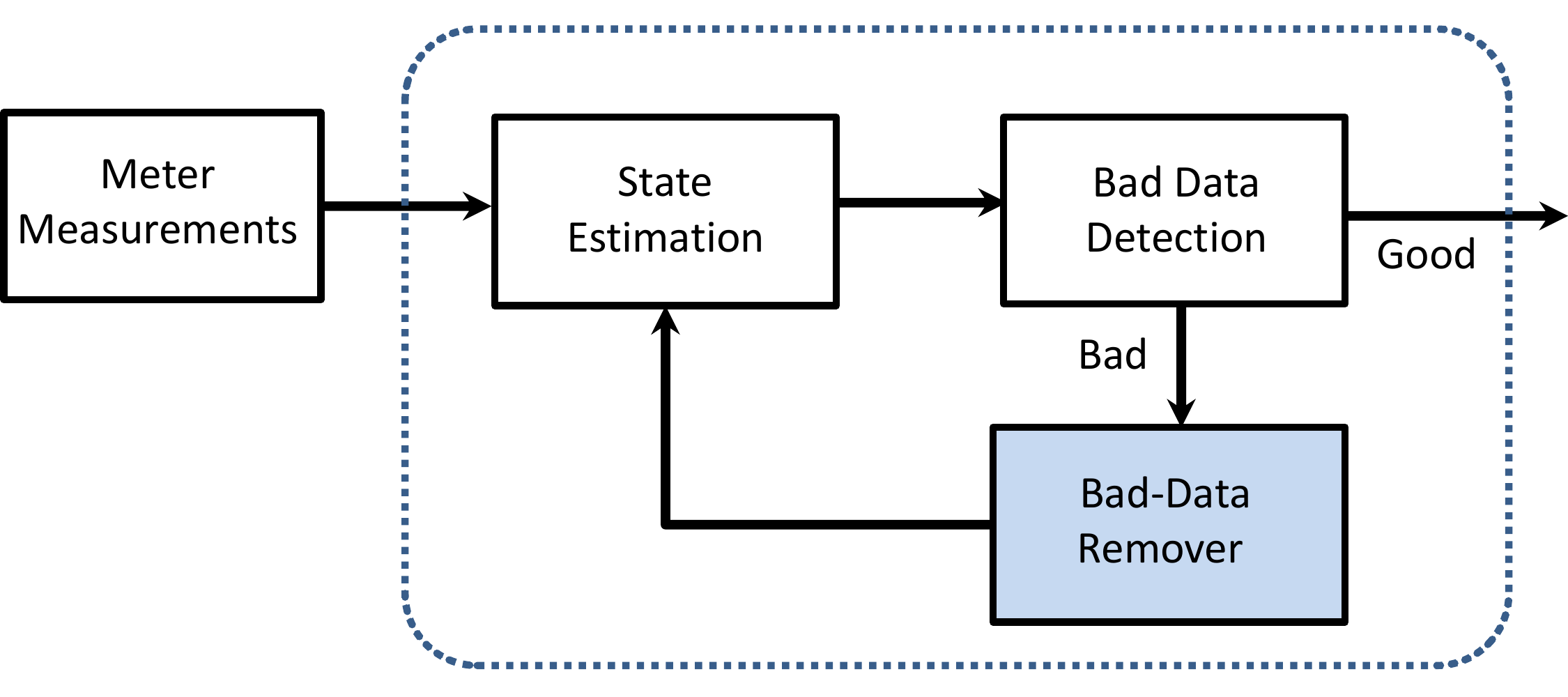}
\caption{State Estimator for a power system \cite{monticelli,abur}}
\label{estimator}
\end{figure}
For the DC model in Equation (\ref{dcmodel}), a Weighed Least Square (WLS) Estimator gives the optimal state vector estimate $x^*$ by minimizing the weighted measurement residual's magnitude $J(x,z) = \|\Sigma ^{-.5}(z-Hx)\|_2$. The estimator then uses the following threshold ($\lambda$) based test to detect the presence of bad-data.
\begin{align}
\|\Sigma ^{-.5}(z-Hx^*)\|_2 &\leq \lambda ~~\text{accept~~~} x^*\nonumber\\
                            &> \lambda ~~\text{detect bad-data} \label{test}
\end{align}

\textbf{Removal of Bad-data:} Once bad-data is detected, the estimator tries to remove the bad-data and then re-estimate the state vector. The measurement residual vector $r$ corresponding to the estimated $x^*$ is given by $r = z-Hx^* = [I - H(H^T\Sigma^{-1}H)^{-1}H^T\Sigma^{-1}]z$ \cite{monticelli,abur}. Using this relation, we can derive the variance $R_r$ of residual $r$ as well as the normalized residual of the data.
It is shown in \cite{monticelli,abur} that for the general case of multiple bad data entries (as in our case), a sequential bad-data remover described in previous literature, is sub-optimal. The optimal strategy for the estimator is to \textit{remove the minimum number of measurements} such that the residual produced by the remaining measurements passes the bad-data detection test in (\ref{test}). In addition, the estimator needs to ensure that the removed measurements do not lead to a loss of rank in the measurement matrix as that will make the system unobservable. The optimal bad-data removal procedure is formulated as the following non-convex problem \cite{monticelli}:
\begin{align}
&\min_{d \in \{0,1\}^m} \|(\textbf{1} - d)\|_0 \label{ident}\\
\text{s.t. ~} & rank (H_d) = n,~~ J(x*, z_d) \leq \lambda_d
\end{align}
Here, $H_d$, $z_d$, $J(x*, z_d)$ and $\lambda_d$ respectively denote the updated measurement matrix, measurement vector, minimum weighted residual magnitude and threshold obtained after the measurements corresponding to $0$ entries in $d$ are removed.

\subsection{Attack Models}
 Let the adversary introduce an attack vector $a$ in the measurements to generate the corrupted measurement vector $\hat{z} = z+a$. We assume that the adversary is interested in constructing a feasible attack using minimum number of corrupted measurements ($\|a\|_0$). In a realistic setting, an adversary may be incapable of modifying certain measurements due to geographical isolation or heightened encryption. We call this set of incorruptible measurements as $S_m$ and the complimentary set of corruptible measurements as $S^c_m$. Note that measurements in $S_m$ suffer from noise and measurement errors; they are just free of adversarial manipulation. Next, we briefly describe hidden data attacks that bypass bad-data detection checks.

\textbf{Undetectable Data Attack:} Observe that if $a = Hc$, the measurement residual stays the same as $\|\Sigma ^{-.5}(z-Hx^*)\|_2 = \|\Sigma ^{-.5}(z+a-H(x^*+c))\|_2$. Thus, an erroneous state vector $x^* +c$ is produced without raising any alarm at the bad-data detector \cite{hidden}. The solution to Problem \ref{opt_attack} below gives the adversary's optimal attack vector \cite{poor,deka}.
\begin{align} \label{opt_attack} \tag{P-1}
&\min_{c \in \mathbb{R}^n-\{\textbf{0}\}} \|a\|_{0} \\
\text{s.t. ~} &a =Hc, ~a(i) = 0 ~\forall i \in S_{m} ~~(\text{$S_m$: incorruptible set}) \nonumber
\end{align}
\textbf{Data Attack with Detection:} We now discuss our proposed detectable attack model. We assume that without any adversarial manipulation, measurement $z$ or any observability preserving subset of $z$ is capable of producing a correct state estimate $x^*$. Consider a data attack vector $a$ that fails the bad-data detection test. For the bad-data identification scheme given in (\ref{ident}), this data attack can nonetheless change the state estimate if \emph{removal of $k < \|a\|_0$ measurements is sufficient to satisfy the bad-data detection test while maintaining system observability.}
This provides the conditions needed by a feasible dat attack of our proposed model. Construction of an attack vector for this regime is given by the following optimization problem:
\begin{align} \label{opt_attacknew} \tag{P-2}
&\min_{d \in \{0,1\}^m } \|a\|_{0} \\
\text{s.t. ~} &a = d*(Hc), c \in \mathbb{R}^n-\{\textbf{0}\}\\\nonumber
&a(i) = 0 ~\forall i \in S_{m} ~~(\text{$S_m$: incorruptible measurements}) \nonumber\\
& \|a\|_0 > \|(\textbf{1}-d)*(Hc)\|_0 \label{cond1} \\
& rank(DH) = n \text{~~where~} diag(D) = d \label{cond2}
\end{align}
In Problem \ref{opt_attacknew}, $D$ is a diagonal matrix with diagonal given by vector $d$. $a*b$ represents element-wise multiplication between two vectors $a$ and $b$. Unlike Problem \ref{opt_attack}, here $a$ does not lie in the column space of $H$ as certain entries in the attack vector are deleted by the binary vector $d$. Condition (\ref{cond1}) ensures that the estimator incorrectly identifies uncorrupted elements of $z$ as bad-data. Note that after removal of bad data, the attack vector $a = d*(Hc)$ passes the bad-data detection test as it lies in the column space of the updated measurement matrix $DH$. In the next section, we discuss the design of an optimal attack vector for Problem \ref{opt_attacknew}. 

\section{Optimal Attack Vector Design}
\label{sec:adversary solution}
Consider the DC measurement model for a $n$ bus system given in Equation (\ref{dcmodel}). We now introduce a $(n+1)^{th}$ reference bus with phase angle $0$ and augment $c$ to form vector $\hat{c} = \setlength{\arraycolsep}{2pt} \renewcommand{\arraystretch}{0.8}\begin{bmatrix} c \\0 \end{bmatrix}$. We also add one extra column $h^{g}$ after the rightmost column in measurement matrix $H$ to create a $m$ times $(n+1)$ modified measurement matrix $\hat{H}$. We put $-1$ in $h^g$ for every row in $H$ with a phase angle measurement and $0$ otherwise. We now have
$Hc = \hat{H}\hat{c} = \left[H ~|~ h^g\right]\begin{bmatrix} c \\0 \end{bmatrix}$. Note that all rows in the augmented measurement matrix $\hat{H}$ represent flows. We now state the following theorem without proof from \cite{deka}.
\begin{theorem}[{\cite[Theorem 1]{deka}}] \label{01}
There exists a non-zero binary $0-1$ vector $c_{opt}$ of size $n$ times $1$ for an optimal attack vector $a^*$ for Problem \ref{opt_attack} such that $\|a^*\|_0 = \|Hc_{opt}\|_0$.
\end{theorem}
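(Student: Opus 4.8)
The plan is to read Problem~\ref{opt_attack} as a combinatorial labeling problem on the augmented measurement graph and then extract a binary optimum by threshold‑rounding an arbitrary real optimum. First I would record the graph picture implied by the augmentation $\hat H,\hat c$: since every row of $\hat H$ is a flow‑type row, the $k$‑th coordinate of $Hc=\hat H\hat c$ equals $B_{ij}\bigl(\hat c(i)-\hat c(j)\bigr)$ for the edge $(i,j)$ of the graph $\hat G$ on vertices $\{1,\dots,n+1\}$ carried by measurement $k$ (a phase‑angle measurement at bus $i$ being the edge $(i,n+1)$, with $\hat c(n+1)=0$). Because every $B_{ij}\neq 0$, the support of $Hc$ is exactly the set of ``active'' edges of $\hat G$, i.e.\ the edges whose two endpoints receive distinct values under $\hat c$. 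Thus Problem~\ref{opt_attack} asks for a vertex labeling $\hat c$ with $\hat c(n+1)=0$, $\hat c\not\equiv\mathbf 0$, that is constant on both endpoints of every edge in $S_m$, and that minimizes the number of active edges.

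Next I would fix an optimal $c^*$ of Problem~\ref{opt_attack} with optimal value $\mathrm{OPT}=\|Hc^*\|_0$. Negating $c^*$ changes neither the objective, nor the constraints $a(i)=0$ on $S_m$, nor $c\neq\mathbf 0$, so without loss of generality $\hat c^*$ attains some strictly positive value. Since $\hat c^*(n+1)=0$ is attained as well, there are consecutive distinct values $0=v_p<v_{p+1}$ of $\hat c^*$; put $U=\{\,i:\hat c^*(i)\ge v_{p+1}\,\}$ and let $c'\in\{0,1\}^n$ be the indicator of $U$ on the first $n$ coordinates, so that $\hat c':=\begin{bmatrix}c'\\0\end{bmatrix}$ is consistent with the reference bus (as $n+1\notin U$).

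Then I would check that $c'$ is feasible and no worse than $c^*$. It is nonzero because $U$ is nonempty yet avoids $n+1$: any vertex where $\hat c^*$ is positive lies in $U$. For any edge $(i,j)\in S_m$ we had $\hat c^*(i)=\hat c^*(j)$, hence $i\in U\Leftrightarrow j\in U$, so that edge stays inactive under $\hat c'$. Finally an edge is active under $\hat c'$ only if exactly one endpoint lies in $U$, which forces the two endpoints to have distinct $\hat c^*$‑values, hence that edge is already active under $\hat c^*$; so the active set of $\hat c'$ is contained in that of $\hat c^*$ and $\|Hc'\|_0=\|\hat H\hat c'\|_0\le\mathrm{OPT}$. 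Feasibility of $c'$ then forces equality, so $c_{opt}:=c'$ is a binary optimum and $\|a^*\|_0=\|Hc_{opt}\|_0$.

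The routine core of this argument is the classical fact that a super‑level set of a labeling has a cut‑boundary no larger than the set of edges on which the labeling varies; the only place requiring care is the bookkeeping around the reference bus. I must place the threshold so that the rounded $0$--$1$ labeling simultaneously keeps $\hat c(n+1)=0$, stays away from the all‑zero vector, and does not activate any $S_m$‑edge, and it is exactly the ``without loss of generality positive'' normalization together with choosing the threshold at the value $0$ that lets all three conditions hold at once; that is the step I expect to be the main obstacle to state cleanly.
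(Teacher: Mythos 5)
Your proof is correct: the level-set rounding $U=\{\,i:\hat c^*(i)>0\,\}$ keeps the reference bus at $0$, yields a nonzero $0$--$1$ vector, activates no edge that was not already active (in particular none in $S_m$), and hence does not increase $\|Hc\|_0$, so optimality forces equality and a binary optimum exists. The paper states this theorem without proof (importing it from \cite{deka}), but your thresholding argument is precisely the graph-cut reasoning that the paper's subsequent construction of $A_H$ and $G_H$ presupposes, so it matches the intended approach rather than departing from it.
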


In a similar way, it can be proven through contradiction that \textbf{the optimal attack vector $a^*_1$ for Problem \ref{opt_attacknew} also corresponds to a non-zero binary $0-1$ vector $c_{opt1}$ with $\|a^*\|_0 = \|d*Hc_{opt1}\|_0$}.

Next we create a new matrix $A_H$ by replacing magnitudes of all bus susceptance in $\hat{H}$ with unity.
\begin{align}
A_H(i,j) = 1(\hat{H}(i,j) > 0) - 1(\hat{H}(i,j) < 0) \label{incident}
\end{align}
Observe that $A_H$ represents the incident matrix for a graph with $n+1$ nodes, with edges corresponding to measurements in $\hat{H}$. We denote the graph represented by $A_H$ as $G_H$. The $(n+1)^{th}$ node in $G_H$ represents the reference bus with phase angle $0$. Notice that for any $0-1$ vector $\hat{c} = \setlength{\arraycolsep}{2pt} \renewcommand{\arraystretch}{0.8}\begin{bmatrix} c \\0 \end{bmatrix}$, $\|\hat{H}\hat{c}\|_0 = \|A_H\hat{c}\|_0$, where the non-zero values of $A_H\hat{c}$ represent a cut in graph $G_H$ between the nodes with $\hat{c}(i) = 0$ and the nodes with $\hat{c}(i) = 1$. 
We now write the attack vector design for our proposed Problem \ref{opt_attacknew} in terms of $A_H$ as:

\begin{align} \label{opt_attacknew1} \tag{P-3}
&\min_{d \in \{0,1\}^m } \|a\|_{0} \\
\text{s.t. ~} &a = d*(A_H\hat{c}), a(i) = 0 ~\forall i \in S_{m} \nonumber\\
&\hat{c} \in \{0,1\}^{n+1}- \{\textbf{0}\},~ \hat{c}(n+1) = 0\nonumber\\
& \|d*(A_H\hat{c})\|_0 > \|(\textbf{1}-d)*(A_H\hat{c})\|_0 \label{cond1a}\\
& rank(DA_H) = n \text{~~where~} diag(D) = d \label{cond2a}
\end{align}
Observe that non-zero values in $A_H\hat{c}$ define a graph-cut in $G_H$, out of which the edges with value $1$ in $d$ are included in the attack vector $a$. $a$ of course does not include any edge in $S_m$. Further, condition (\ref{cond1a}) implies that an attack vector is feasible if the number of cut-edges included in the attack vector $a$ is strictly greater than half of the cut-size. Our principal result on constructing an optimal data attack of our proposed regime is given in the following theorem.

\begin{theorem}\label{main}
Let $C_{a^*}$ be a minimum cardinality cut in $G_H$ such that the number of cut-edges in $C_{a^*}$ that belong to $S_m$ is strictly less than half of the cut-size $|C_{a^*}|$. An optimal attack vector for Problem \ref{opt_attacknew1} is given by a sub-set of cut-edges in $C_{a^*} \cap S^c_m$ of cardinality $\lfloor1+ |C_{a^*}|/2\rfloor$.
\end{theorem}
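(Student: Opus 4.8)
The plan is to sandwich the optimal value of Problem~\ref{opt_attacknew1} between a lower bound and an upper bound that both equal $\lfloor 1+|C_{a^*}|/2\rfloor$. Throughout I use that observability of the system is equivalent to connectedness of $G_H$, and I identify a $0$-$1$ vector $\hat c$ with the vertex bipartition $(\{\hat c=1\},\{\hat c=0\})$ it induces and with the edge cut $C_{\hat c}:=\mathrm{supp}(A_H\hat c)$ it determines in $G_H$.

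\emph{Lower bound.} Let $(d,\hat c,a)$ be any feasible triple of Problem~\ref{opt_attacknew1}, put $C:=C_{\hat c}$ and $T:=\mathrm{supp}(a)$. From $a=d*(A_H\hat c)$ we get $T=\{e\in C:d_e=1\}\subseteq C$, $\|a\|_0=|T|$, and $\|(\textbf{1}-d)*(A_H\hat c)\|_0=|C|-|T|$, so condition~(\ref{cond1a}) reads $2|T|>|C|$. The incorruptibility constraint $a(i)=0$ for $i\in S_m$ forces $C\cap S_m\subseteq C\setminus T$, hence $|C\cap S_m|\le|C|-|T|<|C|/2$. Thus $C$ is a cut of $G_H$ with strictly fewer than half its edges in $S_m$, so by minimality $|C|\ge|C_{a^*}|$, giving $|T|>|C|/2\ge|C_{a^*}|/2$; since $|T|$ is an integer, $\|a\|_0=|T|\ge\lfloor 1+|C_{a^*}|/2\rfloor$.

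\emph{Construction and the cardinality constraints.} For the matching upper bound, write $C_{a^*}$ as the cut of a vertex bipartition $(P,\bar P)$ of $G_H$ with the reference vertex $n+1\in\bar P$, and let $\hat c$ be the indicator of $P$; then $\hat c\in\{0,1\}^{n+1}-\{\textbf{0}\}$, $\hat c(n+1)=0$, and $C_{\hat c}=C_{a^*}$. Since $|C_{a^*}\cap S_m|<|C_{a^*}|/2$ we have $|C_{a^*}\cap S^c_m|>|C_{a^*}|/2$, so a subset $T\subseteq C_{a^*}\cap S^c_m$ with $|T|=\lfloor 1+|C_{a^*}|/2\rfloor$ exists; set $d$ to be the indicator of $E\setminus(C_{a^*}\setminus T)$ and $a=d*(A_H\hat c)$. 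Then $\mathrm{supp}(a)=T\subseteq S^c_m$, so the incorruptibility constraint holds and $\|a\|_0=|T|=\lfloor 1+|C_{a^*}|/2\rfloor$, while the elementary inequality $2\lfloor 1+|C_{a^*}|/2\rfloor>|C_{a^*}|$ verifies condition~(\ref{cond1a}).

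\emph{The observability constraint --- the main obstacle.} What remains, and is the crux, is condition~(\ref{cond2a}): $rank(DA_H)=n$ is equivalent to the spanning subgraph $G':=G_H-(C_{a^*}\setminus T)$ on all $n+1$ vertices being connected. I would derive this from the lemma that a minimum-cardinality cut in the restricted family of Theorem~\ref{main} is an inclusion-minimal cut (a bond). To prove the lemma: if $C_{a^*}$ were not a bond, one side of its bipartition would be disconnected in $G_H$, and then $C_{a^*}$ splits as a disjoint union $C_1\sqcup C_2$ of two nonempty cuts (no edge runs between distinct components of that side); from $|C_1\cap S_m|+|C_2\cap S_m|=|C_{a^*}\cap S_m|<(|C_1|+|C_2|)/2$, one of $C_1,C_2$ again has strictly fewer than half its edges in $S_m$, contradicting the minimality of $|C_{a^*}|$. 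With $C_{a^*}$ a bond, $G_H-C_{a^*}$ has exactly two connected components, every edge of the nonempty set $T\subseteq C_{a^*}$ joins them, so $G'$ is connected and~(\ref{cond2a}) holds. Since the constructed $a$ is feasible and attains the lower bound, it is an optimal attack vector, proving Theorem~\ref{main}.
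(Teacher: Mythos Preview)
Your proof is correct and follows the same essential strategy as the paper: build the attack from the stipulated minimum cut, verify condition~(\ref{cond1a}) arithmetically, and use minimality of $C_{a^*}$ to establish condition~(\ref{cond2a}) by contradiction. You go further than the paper in two respects. First, you supply an explicit lower-bound argument showing that every feasible triple $(d,\hat c,a)$ induces a cut with strictly fewer than half its edges in $S_m$, whence $\|a\|_0\ge\lfloor1+|C_{a^*}|/2\rfloor$; the paper's proof only checks feasibility of the constructed vector and leaves optimality implicit in the problem reformulation. Second, your observability argument is sharper: you prove that $C_{a^*}$ must be a bond via the pigeonhole decomposition $C_{a^*}=C_1\sqcup C_2$ (one of which inherits the ``$<\!1/2$ in $S_m$'' property), and then use that both sides of a bond are connected together with $T\neq\emptyset$ to conclude $G'$ is connected. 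The paper's corresponding step---``form a smaller feasible graph-cut using a subset of the removed edges and the ones with non-zero values in $a^*$''---is the same minimality contradiction but stated loosely, without the pigeonhole justification that the smaller cut is again feasible.
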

\begin{proof}
Let $a^*$ denote the attack vector with non-zero entries corresponding to $\lfloor1+ |C_{a^*}|/2\rfloor$ edges in $C_{a^*} \cap S^c_m$. Thus $\|a^*\|_0$ is greater than $|C_{a^*}|/2$ and condition (\ref{cond1a}) is satisfied. The edges of $C_{a^*}$ excluded from $a^*$ are removed as bad-data by the estimator. System observability is preserved if graph $G_H$ stays connected after bad-data removal. If the graph becomes disconnected after the bad-data removal, we can form a smaller feasible graph-cut using a subset of the removed edges and the ones with non-zero values in $a^*$. This  contradicts the definition of $C_{a^*}$. Hence observability is maintained by $C_{a^*}$.
\end{proof}

We now prove some important results on the adversarial potential of data attacks of our regime as compared to undetectable attacks.

\begin{lemma}\label{main1}
Let $a^*_u$ (undetectable attack) and $a^*_d$ (detectable attack) respectively be the optimal attack vector designs for Problem \ref{opt_attack} and Problem \ref{opt_attacknew}, formulated for the same system. Then the following holds: $\|a^*_d\|_0 \leq \lfloor(1+ \frac{\|a^*_u\|_0}{2})\rfloor$
\end{lemma}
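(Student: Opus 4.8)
The plan is to show that the cut in $G_H$ induced by the optimal \emph{undetectable} attack $a^*_u$ is itself an admissible candidate for the minimum-cardinality cut $C_{a^*}$ appearing in Theorem \ref{main}, and then to read off the bound directly from that theorem.

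First I would invoke Theorem \ref{01} (together with the augmentation $\hat{H}$, $A_H$ introduced above): the optimal solution of Problem \ref{opt_attack} can be taken to correspond to a non-zero $0$--$1$ vector $\hat{c}$ with $\hat{c}(n+1)=0$ and $\|a^*_u\|_0 = \|A_H\hat{c}\|_0$. The support of $A_H\hat{c}$ is then a cut $C_u$ of $G_H$ separating the nodes with $\hat{c}(i)=1$ from those with $\hat{c}(i)=0$; since $\hat{c}\neq\textbf{0}$ and $\hat{c}(n+1)=0$, both sides of the cut are non-empty, so $C_u$ is a genuine cut and $|C_u| = \|a^*_u\|_0 \geq 1$.

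Next I would use the feasibility of $a^*_u$ for Problem \ref{opt_attack}, namely the constraint $a^*_u(i)=0$ for all $i\in S_m$. This forces $|C_u\cap S_m|=0$, because an edge of $C_u$ lying in $S_m$ would give a non-zero entry of $a^*_u$ on $S_m$. As $|C_u|\geq 1$, we get $|C_u\cap S_m| = 0 < |C_u|/2$, so $C_u$ lies in the family of cuts over which $C_{a^*}$ is, by Theorem \ref{main}, of minimum cardinality. Hence $|C_{a^*}|\leq |C_u| = \|a^*_u\|_0$. Finally, Theorem \ref{main} gives $\|a^*_d\|_0 = \lfloor 1 + |C_{a^*}|/2 \rfloor$, and by monotonicity of the floor together with $|C_{a^*}|\leq\|a^*_u\|_0$ we obtain $\|a^*_d\|_0 \leq \lfloor 1 + \|a^*_u\|_0/2 \rfloor$. (If Problem \ref{opt_attack} is infeasible one reads $\|a^*_u\|_0 = \infty$ and the bound is vacuous.)

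The main point requiring care is the middle step: translating the algebraic constraint ``$a^*_u$ vanishes on $S_m$'' into the purely graph-theoretic statement ``the cut $C_u$ contains no $S_m$ edge,'' and verifying that the strict inequality $|C_u\cap S_m| < |C_u|/2$ cannot fail in a degenerate way — which it cannot, since $\hat c\neq\textbf{0}$ guarantees a non-empty, hence nonzero-cardinality, cut. Everything else is a direct appeal to Theorems \ref{01} and \ref{main}.
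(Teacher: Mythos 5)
Your argument is correct, and it rests on the same underlying idea as the paper's proof --- that keeping just over half of the support of the optimal undetectable attack yields a feasible detectable attack --- but the formal route is different. The paper's proof works directly at the level of Problem \ref{opt_attacknew}: it takes $a^*_u = Hc_u$, defines $a_d$ by truncating $a^*_u$ to its first $\lfloor 1+\|a^*_u\|_0/2\rfloor$ non-zero positions (equivalently, setting $d=0$ on the remaining non-zero positions), asserts that $a_d$ is feasible for Problem \ref{opt_attacknew}, and concludes by optimality of $a^*_d$. You instead pass through the graph-cut machinery: Theorem \ref{01} lets you identify $a^*_u$ with a cut $C_u$ of $G_H$ satisfying $|C_u\cap S_m|=0<|C_u|/2$, so $C_u$ lies in the family of cuts over which Theorem \ref{main} minimizes; minimality gives $|C_{a^*}|\le|C_u|=\|a^*_u\|_0$ and the formula $\|a^*_d\|_0=\lfloor 1+|C_{a^*}|/2\rfloor$ delivers the bound. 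What your route buys is that the one non-trivial feasibility issue --- the rank/observability constraint $rank(DH)=n$ after the estimator discards the unattacked cut edges --- is discharged by the minimal-cut contradiction argument already inside the proof of Theorem \ref{main}, whereas the paper's ``it can be easily verified'' glosses over exactly this point (and its verification for an arbitrary truncation of $a^*_u$ is not entirely immediate). What the paper's route buys is independence from Theorem \ref{main}: it needs only the feasibility of one explicit candidate, not the exact characterization of the optimum as a minimum-cardinality feasible cut. Your handling of the edge cases (non-emptiness of both sides of $C_u$, the vacuous bound when Problem \ref{opt_attack} is infeasible) is careful and correct.
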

\begin{proof}
Note that if we fix $d = \textbf{1}$ in Problems \ref{opt_attacknew} ($\equiv$ Problem \ref{opt_attacknew1}), it reduces to Problem \ref{opt_attack}. The optimal undetectable attack vector $a^*_u$ is given by $Hc_u ~ \exists c_u \neq \textbf{0}$. 
Let $\|a^*_u\|_0 = k$ with the non-zero entries in $a^*_u$ being located at positions $1$ to $k$. Consider $a_d$ such that $a_d(i) = a^*_u(i)$ for $i \in \{1,\lfloor(1+ \frac{k}{2})\rfloor\}$ and $0$ elsewhere. It can be easily verified that $a_d$ is a feasible detectable attack for Problem \ref{opt_attacknew1}. As $a^*_d$ is the optimal attack for Problem \ref{opt_attacknew1}, we have $\|a^*_d\|_0 \leq \|a^d\|_0 = \lfloor(1+ \frac{\|a^*_u\|_0}{2})\rfloor$.
\end{proof}

Note that this provides only an upper bound on the cardinality of optimal attack vectors and in practice, the reduction in cardinality can be much greater. Further the following is true:

\begin{lemma}\label{main2}
The set of system conditions with feasible data attacks for Problem \ref{opt_attacknew1} is larger than that for Problem \ref{opt_attack}.
\end{lemma}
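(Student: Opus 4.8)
The assertion is a proper set inclusion, so I would prove it in two parts: (i) every system admitting a feasible attack for Problem \ref{opt_attack} also admits one for Problem \ref{opt_attacknew1}; (ii) some system admits a feasible attack for Problem \ref{opt_attacknew1} but none for Problem \ref{opt_attack}.

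Part (i) follows directly from Lemma \ref{main1} and its proof: starting from any feasible undetectable attack $a^*_u = Hc_u$ of support size $k$, the vector $a_d$ obtained by keeping the values of $a^*_u$ on $\lfloor 1 + k/2 \rfloor$ of its support positions (and zeroing the rest) is shown there to satisfy all constraints of Problem \ref{opt_attacknew1} --- it vanishes on $S_m$, meets the strict-majority condition (\ref{cond1a}), and, by the argument in Theorem \ref{main}, the residual cut-edges discarded by the estimator cannot disconnect $G_H$, so (\ref{cond2a}) holds. Hence feasibility of Problem \ref{opt_attack} on a system implies feasibility of Problem \ref{opt_attacknew1} on it, which gives the inclusion.

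For part (ii) I would first translate infeasibility of Problem \ref{opt_attack} into graph language: by Theorem \ref{01} and the cut interpretation of $A_H\hat c$, Problem \ref{opt_attack} is feasible exactly when $G_H$ has an edge cut $C(W)$ (for a nonempty vertex set $W$ not containing the reference node) with $C(W)\cap S_m=\emptyset$, equivalently exactly when the subgraph of $G_H$ formed by the incorruptible measurements fails to connect all buses, i.e. when the restriction of $H$ to the rows in $S_m$ is column-rank deficient. I would then build a system violating Problem \ref{opt_attack} but not Problem \ref{opt_attacknew1}: let the incorruptible measurements form a spanning tree $G_{S_m}$ on all buses (including the reference node) --- so $G_{S_m}$ is connected and Problem \ref{opt_attack} is infeasible --- pick a leaf bus $v\neq$ reference, and add two corruptible flow measurements incident to $v$. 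Then $C(\{v\})$ consists of one $S_m$-edge and two $S^c_m$-edges, so $|C(\{v\})\cap S_m|=1<|C(\{v\})|/2$; corrupting the two $S^c_m$-edges of this cut while the estimator discards the single $S_m$-edge is a feasible attack for Problem \ref{opt_attacknew1}, with observability preserved because $v$ stays attached through the two retained (corrupted) edges, so $rank(DA_H)=n$. This places the system in the feasible set of Problem \ref{opt_attacknew1} but outside that of Problem \ref{opt_attack}, which together with (i) yields the strict inclusion.

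The routine half is part (i); the delicate step is the equivalence ``Problem \ref{opt_attack} infeasible $\iff$ $G_{S_m}$ connected $\iff$ $H$ restricted to $S_m$ has full column rank,'' and, within the example, checking that deleting the lone incorruptible cut-edge does not break the rank/observability constraint (\ref{cond2a}) --- which is precisely where the retained corrupted edges at $v$ matter.
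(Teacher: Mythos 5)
Your proof is correct and takes essentially the same route as the paper's (informal) argument: the inclusion is exactly the paper's observation that any undetectable attack truncates to a feasible detectable one via Lemma \ref{main1}, and the strictness is the paper's example of a system in which every cut of $G_H$ contains an incorruptible edge yet some cut has a strict minority of them. Your explicit spanning-tree-plus-leaf construction simply fleshes out, with a concrete witness and the observability check, what the paper only sketches in prose.
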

Observe that every undetectable attack can give a corresponding feasible detectable data attack with detection for our proposed regime. However, data attacks with detection can exist for cases where no undetectable attacks are possible. For example, no undetectable attack exists if every cut in $G_H$ includes at least one incorruptible measurement. An attack with detection may exist if the number of edges of $S_m$ in a cut is less than the cut-size. The next result (proof omitted for brevity) shows that for preventing detectable attacks, at least $50\%$ of the measurements need to be incorruptible.

\begin{corollary}\label{main3}
An attack vector for Problem \ref{opt_attacknew1} always exists if less than half of the measurements in the system are incorruptible.
\end{corollary}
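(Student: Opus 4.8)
The plan is to produce, under the hypothesis $|S_m| < m/2$, a single cut of $G_H$ that meets the conditions of Theorem~\ref{main}, which then delivers the attack vector. Recall from the discussion preceding Theorem~\ref{main} that any non-trivial cut $C$ of $G_H$ --- one induced by a vector $\hat c \in \{0,1\}^{n+1}\setminus\{\mathbf 0\}$ with $\hat c(n+1)=0$ --- gives a feasible detectable attack as soon as its corruptible edges strictly outnumber its incorruptible edges, i.e. $|C\cap S_m| < |C|/2$; Theorem~\ref{main} additionally takes care of observability by passing to a minimum-cardinality such cut. So it suffices to show that at least one cut with $|C\cap S_m| < |C|/2$ exists.

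To find one, I would run a simple averaging argument over labelings. Assign $\hat c(1),\dots,\hat c(n)$ independent uniform bits and set $\hat c(n+1)=0$. Every edge of $G_H$, be it a flow edge between two buses or a phase-angle edge from a bus to the reference node, has its endpoints labeled differently with probability exactly $\tfrac12$, hence belongs to the induced cut $C$ with probability $\tfrac12$. Linearity of expectation then gives
\begin{align*}
\mathbb{E}\!\left[\,|C\cap S_m^c|-|C\cap S_m|\,\right]=\frac{1}{2}\left(|S_m^c|-|S_m|\right)=\frac{1}{2}\left(m-2|S_m|\right)>0,
\end{align*}
where positivity uses $|S_m|<m/2$. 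Consequently some labeling $\hat c$ attains $|C\cap S_m^c|-|C\cap S_m|>0$; for that $\hat c$ we get $|C\cap S_m|<|C|/2$ and, crucially, $C\cap S_m^c\neq\emptyset$, so $C$ is nonempty and $\hat c\neq\mathbf 0$.

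With one qualifying cut in hand, the family of cuts of $G_H$ having fewer than half their edges in $S_m$ is nonempty and, being a subset of the finitely many cuts of a finite graph, contains a minimum-cardinality element $C_{a^*}$. Applying Theorem~\ref{main} to $C_{a^*}$ produces a feasible attack vector for Problem~\ref{opt_attacknew1}, supported on $\lfloor 1+|C_{a^*}|/2\rfloor$ of the edges in $C_{a^*}\cap S_m^c$ and leaving the remaining measurement matrix full rank, which completes the proof. The only point that needs care is the non-triviality of the cut: the averaging bound yields a labeling with a \emph{strictly} positive gap between corruptible and incorruptible cut-edges, and it is exactly this strict inequality --- rather than the weaker $|C\cap S_m|\le|C|/2$ --- that both excludes the degenerate all-zero labeling and guarantees condition (\ref{cond1a}).
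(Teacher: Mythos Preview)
The paper explicitly omits the proof of this corollary (``proof omitted for brevity''), so there is nothing to compare your argument against. Your probabilistic averaging argument is correct and is a clean way to establish the result: with the reference bus fixed at label $0$ and the remaining $n$ labels chosen as independent uniform bits, every edge of $G_H$ --- whether a flow edge between two ordinary buses or a phase-angle edge to the reference --- is cut with probability exactly $\tfrac12$, so $\mathbb{E}\bigl[|C\cap S_m^c|-|C\cap S_m|\bigr]=\tfrac12(m-2|S_m|)>0$, and since this difference is integer-valued some realization achieves a strictly positive value. Your observation that strict positivity simultaneously forces $C\neq\emptyset$ (hence $\hat c\neq\mathbf 0$) and the feasibility inequality $|C\cap S_m|<|C|/2$ is the right way to close the non-triviality gap, and the final appeal to Theorem~\ref{main} on a minimum-cardinality qualifying cut correctly inherits the observability guarantee from that theorem's proof.
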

Thus, the number of incorruptible measurements needed to prevent a detectable attack scales with the number of edges, whereas that needed to prevent hidden attacks scales with the number of nodes in $G_H$. In the next section, we discuss two approximate algorithms to generate attack vectors for this regime.

\section{Algorithm For Attack Vector Construction}
\label{sec:algo}
Theorem \ref{hardness} gives the computational complexity of finding the optimal attack vector in Problem \ref{opt_attacknew1}.
\begin{theorem}\label{hardness}
Problem \ref{opt_attacknew1} is NP-hard.
\end{theorem}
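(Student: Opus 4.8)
The plan is to prove NP‑hardness by reduction from a known NP‑complete problem. First I would replace \ref{opt_attacknew1} by the purely combinatorial problem that Theorem \ref{main} shows it to be equivalent to: given the multigraph $G_H$ and the incorruptible edge set $S_m$, find a minimum‑cardinality cut $C$ of $G_H$ with $|C\cap S_m|<|C|/2$. Since the optimal value of $\|a\|_0$ is then $\lfloor 1+|C_{a^*}|/2\rfloor$ for the minimizing cut $C_{a^*}$, the decision version of \ref{opt_attacknew1} (``is there a feasible attack with $\|a\|_0\le k$?'') is polynomially equivalent to ``does $G_H$ admit a cut $C$ with $|C|\le 2k-1$ and $|C\cap S_m|<|C|/2$?'' The side constraints of \ref{opt_attacknew1} cause no trouble here: the requirement $\hat c(n+1)=0$ only fixes which side of the cut holds the reference node, and a cut is unchanged under exchanging its two sides; and the observability constraint $rank(DA_H)=n$ is already subsumed by Theorem \ref{main}, whose proof shows that a minimum constrained cut automatically leaves $G_H$ connected after the bad‑data removal. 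Finally, any connected multigraph with any distinguished edge subset arises as some instance $(\hat H,S_m)$ --- pick a grid whose topology supplies the desired edges, place flow and phasor meters to realize the rows of $\hat H$, and declare the chosen meters incorruptible --- so it suffices to prove this ``label‑constrained minimum cut'' problem NP‑hard.

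The reduction itself is the second step. The idea is to take an instance of an NP‑complete problem --- a partition/subset‑sum style problem, or a graph‑partition style problem such as maximum cut --- and to build $G_H$ together with the split $S_m\cup S^c_m$ so that: (i) the cuts of $G_H$ are forced into correspondence with the candidate solutions of the source instance; (ii) the labels are arranged so that the majority inequality $|C\cap S_m|<|C|/2$ reproduces exactly the feasibility condition of the source (a target‑sum equality, or ``the source graph has a cut of size at least $K$''), so that a cut is feasible for \ref{opt_attacknew1} precisely when its corresponding solution is feasible; and (iii) the cardinality $|C|$ of a cut is an affine function of the source objective, so that a minimum constrained cut of $G_H$ has cardinality at most the reduction's threshold if and only if the source instance is a yes‑instance. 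One then verifies both directions and the polynomial size of the construction.

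The step I expect to be the main obstacle is engineering the gadget so that the cardinality minimizer cannot shortcut it. This difficulty is intrinsic to the fact that \ref{opt_attacknew1} minimizes cut \emph{cardinality}: in a careless construction $G_H$ always has a very cheap cut --- for instance the one isolating a single low‑degree node --- which is either already feasible, in which case the optimum collapses to a quantity one can read off directly (just as happens for \ref{opt_attack}, and more generally whenever the incorruptible set is not exploited), or is infeasible but does not, by itself, certify that the intended ``solution'' cuts are the cheapest feasible ones. The gadget must therefore simultaneously (a) pack enough incorruptible edges around every cheap separator to make $|C\cap S_m|<|C|/2$ fail there, and (b) keep the intended solution cuts feasible and of controlled cardinality, all while making the majority inequality coincide precisely with the source constraint; reconciling (a) and (b) is the crux. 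A secondary technical point is keeping the reduction polynomial: if the source is a number problem, the numbers get encoded in unary as bundles of parallel edges, so one should either reduce from a strongly NP‑complete problem or argue the bundle sizes remain polynomial.
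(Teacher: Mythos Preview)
Your proposal is a plan rather than a proof: you outline the shape a reduction should have but never actually give one, and you explicitly flag the central gadget-engineering step as unresolved. That is a genuine gap. More importantly, the difficulty you anticipate --- preventing cheap cuts from undercutting the intended gadget while still making the majority inequality encode the source constraint --- is self-inflicted, because you are attacking the threshold-decision version of the \emph{optimization} problem. The paper avoids all of this by instead proving that merely deciding \emph{feasibility} of Problem~\ref{opt_attacknew1} is hard: it takes $G_H$ to be a complete graph, so that every cut $(A,A^c)$ has size exactly $|A|\,|A^c|$, and then the feasibility condition $|C\cap S_m|<|C|/2$ becomes $cut_{S_m}(A,A^c)/(|A|\,|A^c|)<1/2$, i.e.\ the existence of a ratio cut (sparsest cut) of value below $1/2$ in the graph $(V,S_m)$, which is NP-complete.

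Two things are worth noting about this comparison. First, by reducing to feasibility rather than to optimality, the paper never has to control the \emph{size} of the minimum feasible cut, so your ``cheap-cut shortcut'' worry simply does not arise. Second, the complete-graph trick is exactly the device that turns the awkward mixed constraint ``incorruptible edges are fewer than half the cut'' into a clean ratio; your proposal, by leaving $G_H$ generic, is forced to juggle both cut cardinality and label balance simultaneously. If you want to complete your own route, you would still need a concrete source problem and a concrete gadget; but the paper's argument shows that the detour through a size threshold is unnecessary.
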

\textbf{Proof steps:}
We prove this by showig the NP-hardness of determining the existence of a feasible solution of Problem \ref{opt_attacknew1}. Consider the case of $G_H$ being a complete graph with cut $C$ separating the nodes into sets $A$ and $A^c$. Let the number of edges of $S_m$ in cut $C$ be $cut_{S_m}(A,A^c)$. For feasible attack, $cut_{S_m}(A,A^c)$ should be less than half of the cut-size $|A||A^c|$ (complete graph). Thus, we need $\frac{cut_{S_m}(A,A^c)}{|A||A^c|} < .5 $. This represents a ratio-cut of value less than $.5$. Thus our problem is equivalent to establishing the existence of a ratio-cut of value less than $.5$, a known NP-complete problem \cite{ratio}.

We now discuss two approximate schemes to find the optimal solution for Problem \ref{opt_attacknew1}. The first scheme uses a Semi-definite Programming based randomized approach \cite{SDP}.\\
\textbf{(a) SDP approach:} Our SDP based technique builds upon the randomized solution for max-cut given by Goemans and Williamson \cite{SDP}. The following is a SDP relaxation to find the optimal feasible cut in $G_H$.\\


\begin{tabular}{ccc}
$\begin{aligned}
\min_{x \in \{-1,1\}^{n+1}} \langle L^1_H, xx^T\rangle& \nonumber\\
\text{s.t. ~}  
\frac{\langle L^2_H, xx^T \rangle}{4} \leq  -1&
\end{aligned}$
&$ \xrightarrow[xx^T \text{to} X]{\text{Relax}}$&
$\begin{aligned} \label{SDP1}
\min_{X \in S^{n+1}} \langle L^1_H, X\rangle& \text{~~~(P-4)} \\
\text{s.t. ~}  diag(X) =& \textbf{1}\nonumber\\
\langle L^2_H, X \rangle \leq  -&4 
\end{aligned}$
\end{tabular}\\

Here, $S^{n+1}$ is the space of positive semidefinite matrices of size $(n+1)$. $L^1_H$ is the standard Laplacian matrix for graph $G_H$ with edge-weights unity, while, $L^2_H$ is a modified Laplacian for $G_H$ where edges in $S_m$ and $S^c_m$ are given weights of $1$ and $-1$ respectively. The original problem tries to label nodes in $G_H$ with values in $\{-1,1\}$ so that $\langle L^1_H, xx^T\rangle/2$ represents the cut-size. $\frac{\langle L^2_H, xx^T \rangle}{4} \leq  -1$ ensures that the cut contains greater number of edges of $S^c_m$ than of $S_m$. Following the work in \cite{SDP}, we give randomized Algorithm $1$ for Problem \ref{opt_attacknew1}.
\begin{algorithm}
\caption{SDP Relaxation for Problem \ref{opt_attacknew1}}
\begin{algorithmic}[1]
\STATE Solve Problem P-4 to get $X^*$. Generate $X^* = B^TB$ by Cholesky decomposition.
\STATE Randomly pick a vector $w \in \mathbb{R}^{n+1}$.
\FOR {$i = 1 \TO n+1$}
  \STATE $x(i) = 1(\langle B(i,:), w\rangle \geq 0) - 1(\langle B(i,:), w\rangle < 0)$
\ENDFOR
\IF {$\langle L^2_H, xx^T \rangle< 0$}
   \STATE Output optimal attack as subset of $1+ \lfloor\frac{\text{cut-size}}{2}\rfloor$ edges of $S^c_m$ in cut defined by $x$.
\ENDIF
\end{algorithmic}
\end{algorithm}

\textbf{(b) Iterative Min-cut approach:} The min-cut of $G_H$ computed using unit-edge weights may contain more edges of set $S_m$ than of $S^c_m$ and can be infeasible. Our approximate Algorithm $2$ tries to overcome this by iteratively computing the min-cut with smaller edge-weights for set $S^c_m$. In particular, steps \ref{step1} to \ref{step8} reduce the edge-weights for $S^c_m$ to replace the current infeasible min-cut ($C$) of cardinality $c$ by a feasible cut (if it exists) of cardinality $c+b$. If lowering edge weights alone is not sufficient to achieve a feasible cut, step \ref{step9} chooses one edge in $C$ randomly that belongs to $S_m$ and gives it infinite weight. The algorithm iterates until a feasible solution is found or all edges of $S_m$ have been given infinite weights, which indicates absence of any solution.
\begin{algorithm}
\caption{Iterative Min-cuts for Problem \ref{opt_attacknew1}}
\begin{algorithmic}[1]
\STATE Give edge weight $1- \epsilon$ in $S^c_m$. Compute min-cut $C$ in $G_H$.
\STATE $c \gets |C|, c_m \gets \sum_{i\in C} 1(i \in S_m), b \gets 1$
\WHILE {($c < \infty, 2c_m \geq c$)}
\IF {$2c_m \geq b + c$}  \label{step1}
   \STATE $\beta \gets 1 - \epsilon - b(c_m + b - \lfloor (c +b- 1)/2\rfloor)^{-1}$ \label{step2}
   \STATE Give edge weight $\beta$ in $S^c_m$, get min-cut $C_1$ in $G_H$. \label{step3}
   \IF {$|C| = |C_1|$} \label{step4}
      \STATE $b \gets b+1$ \label{step5}
   \ELSE \label{step6}
      \STATE $C \gets C_1, c \gets |C|, c_m \gets \sum_{i\in C} 1(i \in S_m), b \gets 1$ \label{step7}
   \ENDIF \label{step8}
\ELSE
   \STATE Randomly pick edge $i \in C \cap S_m$, give $\infty$ weight. \label{step9}
   \STATE $\beta \gets 1 -\epsilon$. Give weight of $\beta$ to s $S^c_m$. Compute min-cut $C$ in $G_H$.
   \STATE $c \gets |C|$, $c_m \gets \sum_{i\in C} 1(i \in S_m), b \gets 1$.
\ENDIF
\ENDWHILE
\IF {$|C| \neq \infty$}
   \STATE Output optimal attack as $1+ \lfloor\frac{c}{2}\rfloor$ edges of $S^c_m$ in $C$.
\ENDIF
\end{algorithmic}
\end{algorithm}

The theoretical analysis of the exact expressions for run-time and performance guarantees of the proposed algorithms will be covered in a future work. In the next section, we show simulation results on the performance of the algorithms on IEEE  test systems.

\section{Results on IEEE test systems}
\label{sec:results}
 We run simulations in Matlab Version 2009a and present averaged results in this section. We consider the IEEE $14$ bus test system \cite{testsystem}. In this system, we put phasor measurements on $60\%$ of the buses and flow measurements on all lines. Figure \ref{attacksize} shows the average sizes of the best attack vectors constructed by our proposed solution schemes discussed in the previous section. As expected, the size of the attack vector increases with increase in the number of incorruptible measurements ($S_m$) in the system. Moreover, we plot the average size of undetectable attacks in the same figure to show the significant improvement in cardinality offered by our attack regime; the improvement being greater for Algorithm $2$ than Algorithm $1$. Next, Figure \ref{attackno} plots the rise in the average fraction of cases resilient to data attacks with number of incorruptible measurements (set $S_m$). Observe that, unlike undetectable attacks that increasingly become infeasible at higher levels of secure measurements, feasible attacks of our proposed model are still designed by Algorithm $1$ and $2$. Thus, total resilience against attacks of our regime requires greater placement of secure measurements than that needed for protection against undetectable attacks. Both these figures validate our claim that data attacks with detection are far more potent than previously studied undetectable attacks.

\begin{figure}[ht]
\centering
\includegraphics[width=0.44\textwidth]{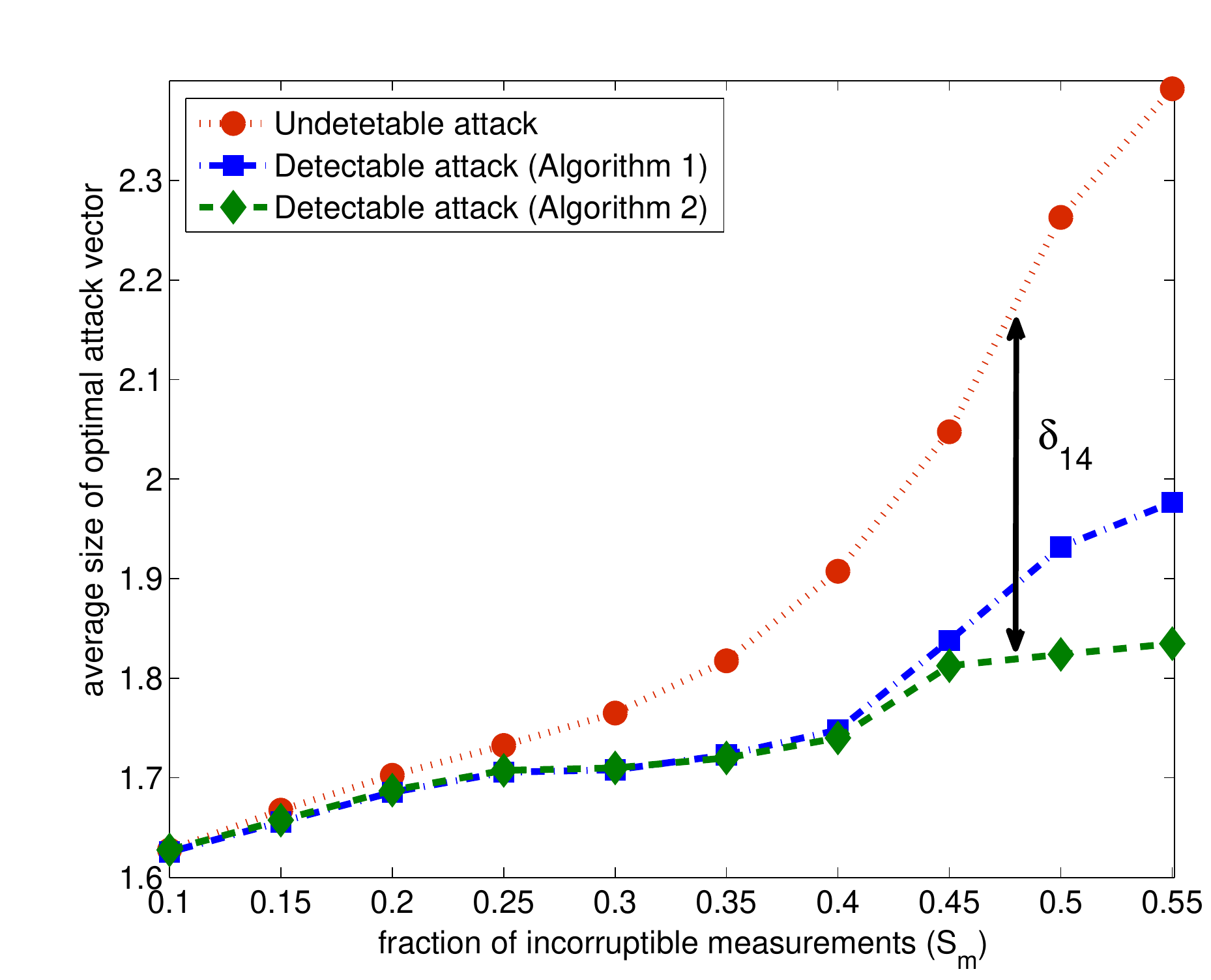}
\caption{Average size of feasible attacks given by Algorithm $1$ and $2$ for IEEE $14$ bus test system with flow measurements on all lines, phasor measurements on $60\%$ of the buses and protection on a fraction of measurements selected randomly. $\delta_{14} > 1 + \lfloor c/2\rfloor$ where $c$ is size of undetectable attack}
\label{attacksize}
\end{figure}
\begin{figure}
\centering
\includegraphics[width=0.41\textwidth]{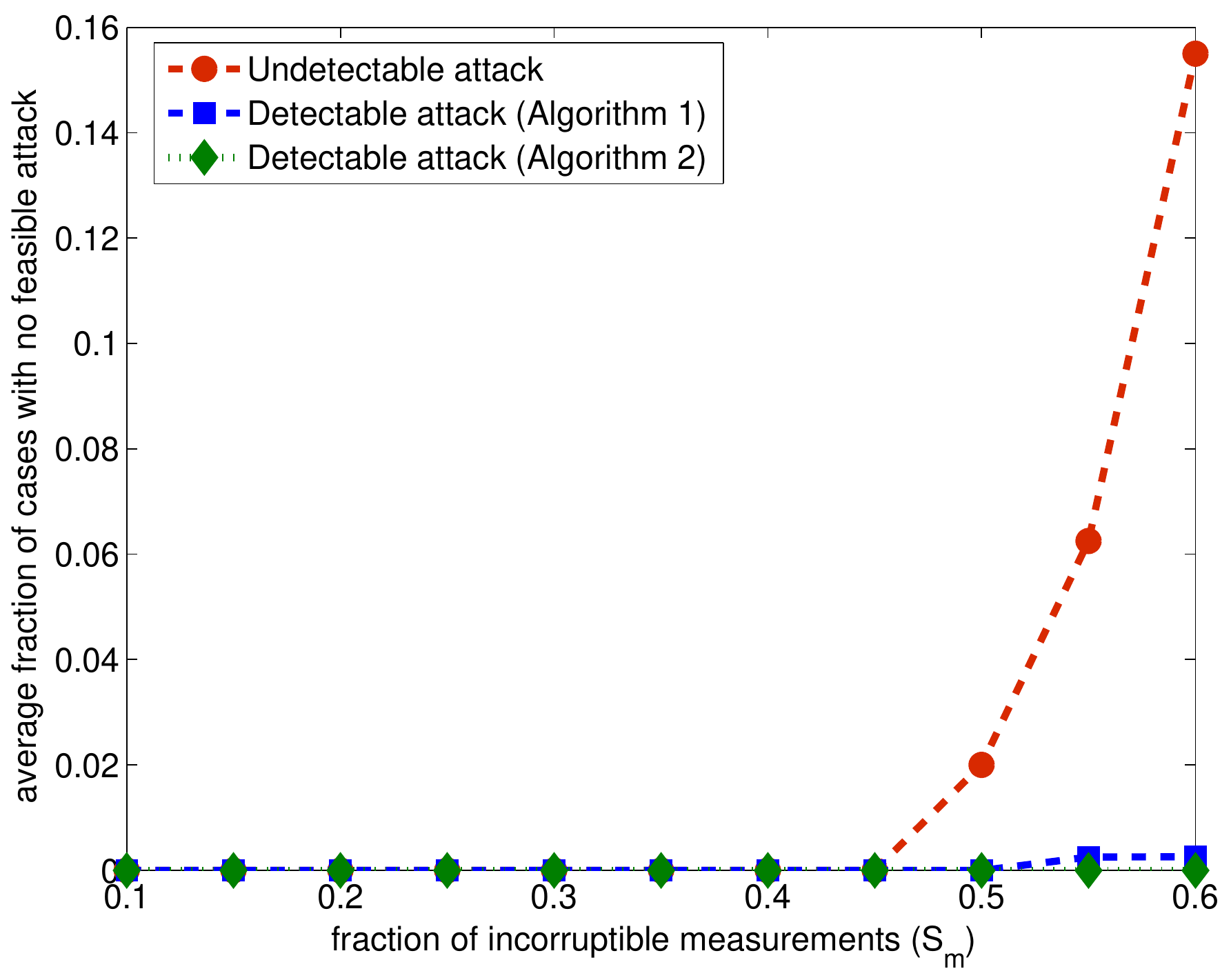}
\caption{Average fraction of simulated test cases with no feasible attacks given by Algorithm $1$ and $2$ for IEEE $14$ bus test system with flow measurements on all lines, phasor measurements on $60\%$ of the buses and protection on a fraction of measurements selected randomly.}
\label{attackno}
\end{figure}

\section{Conclusion}
\label{sec:conclusion}
We propose a new framework of detectable data attacks on state estimation that operate by making the state estimator incorrectly label and remove good measurements as bad-data. The minimum number of measurements that need to be manipulated for a successful detectable data attack is upper bounded by half of that needed for previously studied undetectable data attacks. We show that the optimal attack of our regime is given by the minimum cardinality graph cut satisfying a feasibility constraint. We prove that the problem of designing the optimal detectable attack is NP-hard and present two approximate algorithms for it. Simulations of attack vector construction on IEEE $14$-bus system demonstrate that our attack regime undermines the security of state estimation further than current attack models. We are currently studying guarantees on the performance of our algorithms and design of protection schemes against our attack framework.

\end{document}